\tikzset{
  drop shadow/.style={
    preaction={
      fill=black!50,
      draw=none,
      transform canvas={shift={(0.5mm,-0.5mm)}}
    }
  }
}
\pgfplotsset{compat=1.17}
\theoremstyle{plain}
\newtheorem{theorem}{Theorem}[section]
\newtheorem{lemma}[theorem]{Lemma}
\title{FACTER: Fairness-Aware Conformal Thresholding and Prompt Engineering for Enabling Fair LLM-Based Recommender Systems}
\author{%
  \begin{minipage}[t]{0.30\textwidth}
    \centering
    Arya Fayyazi\\
    University of Southern California\\
    Los Angeles, California, USA\\
    \texttt{afayyazi@usc.edu}
  \end{minipage}\hfill
  \begin{minipage}[t]{0.30\textwidth}
    \centering
    Mehdi Kamal\\
    University of Southern California\\
    Los Angeles, California, USA\\
    \texttt{mehdi.kamal@usc.edu}
  \end{minipage}\hfill
  \begin{minipage}[t]{0.30\textwidth}
    \centering
    Massoud Pedram\\
    University of Southern California\\
    Los Angeles, California, USA\\
    \texttt{pedram@usc.edu}
  \end{minipage}
}
\date{} 
\begin{document}

\maketitle
\begin{abstract}

We propose \textbf{FACTER}, a fairness-aware framework for LLM-based recommendation systems that integrates conformal prediction with dynamic prompt engineering. By introducing an adaptive semantic variance threshold and a violation-triggered mechanism, FACTER automatically tightens fairness constraints whenever biased patterns emerge. We further develop an adversarial prompt generator that leverages historical violations to reduce repeated demographic biases without retraining the LLM. Empirical results on MovieLens and Amazon show that FACTER substantially reduces fairness violations (up to 95.5\%)  while maintaining strong recommendation accuracy, revealing semantic variance as a potent proxy of bias. 

\end{abstract}

\section{Introduction}
\label{sec:intro}

\begin{figure*}[t]
\centering
\scalebox{0.7}{
\begin{tikzpicture}[
    font=\sffamily\footnotesize,
    node distance=0.8cm and 0.8cm,
    line/.style={
      ->,
      thick,
      >=Latex,
      draw=black!70
    },
    box/.style={
      draw=black!70,
      rounded corners,
      minimum width=5.2cm,
      minimum height=1.3cm,
      align=left,
      fill=#1!15
    }
]

\node[box=blue, label={[font=\large,inner sep=2pt]north:\faUser\; \textbf{(1) Original Prompt}}] (prompt1) {
  \small
  \texttt{User: Gender = F, Age = 30, Occupation = Teacher}\\
  \texttt{History: ["The Godfather", "Pulp Fiction"]}\\
  \emph{`I've really enjoyed these films. Any suggestions for what to watch next?''}
};

\node[box=orange, label={[font=\large,inner sep=2pt]north:\faRobot\; \textbf{(2) LLM Response}}, right=1.2cm of prompt1.east] (resp1) {
  \small
  \emph{`Since you are a 30-year-old woman,}\\
  \emph{you might enjoy a lighthearted romantic film.}\\
  \emph{I suggest *The Notebook*—a great choice for female teachers.''}
};

\draw[line,decorate,decoration={snake,amplitude=0.7mm,segment length=2mm}]
(prompt1.east) -- (resp1.west);

\node[box=red, label={[font=\large,inner sep=4pt]west:\faExclamationCircle\; \textbf{(3) Fairness Violation}}, below=1.2cm of resp1.south] (detect) {
  \small
  \emph{Bias detected: Male teacher with same history got a different recommendation}\\
  \textbf{Issue:} Gender-based stereotyping.\\
  \textbf{Observed:} Men received crime/action; women received romance/drama.
};

\draw[line] (resp1.south) -- (detect.north);

\node[box=green, label={[font=\large,inner sep=2pt]south:\faCogs\; \textbf{(4) Prompt Engineering}}, below=1.2cm of detect.south] (prompt2) {
  \small
  \texttt{<system>: `Use these examples to avoid relying on demographics...}\\
  \texttt{AVOID (Gender=F) -> (The Notebook)''}\\
  \texttt{<user>: `I've watched *The Godfather* and *Pulp Fiction*.}\\
  \texttt{What similar movies would you recommend?''}
};

\draw[line] (detect.south) -- (prompt2.north);

\node[box=yellow, label={[font=\large,inner sep=2pt]north:\faIcon{thumbs-up}\; \textbf{(5) Updated LLM Response}}, left=1.8cm of prompt2.west] (resp2) {
  \small
  \emph{`Based on your viewing history, I recommend}\\
  \emph{*Goodfellas*—a crime film similar in style}\\
  \emph{to your previous favorites.''}
};

\draw[line,decorate,decoration={brace,mirror,amplitude=0.1pt}]
(prompt2.west) -- (resp2.east);

\end{tikzpicture}
} 
\caption{\textbf{FACTER's Iterative Prompt Engineering in Practice.}
(1) A user requests movie recommendations.  
(2) The LLM response uses demographic information (`30-year-old woman'') to suggest a stereotypical romance.  
(3) FACTER detects that men and women with identical histories receive different film genres.  
(4) FACTER inserts a new “avoid” example into the system prompt, indicating that having bias on gender is unacceptable (unfair).  
(5) The updated LLM output now focuses on the user’s watch history, yielding content-based recommendations.}
\label{fig:fact_prompt_example}
\end{figure*}

Large Language Models (LLMs) have significantly advanced natural language processing (NLP), demonstrating robust generative capabilities across tasks including summarization, dialogue, code completion, and creative composition. Representative models such as GPT-3 \citep{brown2020language}, BERT \citep{devlin2019bert}, Llama-2 \cite{touvron2023llama}, Llama-3 \cite{dubey2024llama}, and Mistral-7B \citep{jiang2023mistral} leverage massive corpora and complex architectures to produce remarkably fluent text, often approaching or matching human performance in various linguistic benchmarks. Yet a growing body of work reveals that these models can inadvertently perpetuate or even amplify biases related to sensitive attributes such as \emph{race}, \emph{gender}, or \emph{age} \citep{shengetal2019woman, blodgett2020language, bary2021we}. Generative disparities become especially concerning when the outputs influence high-stakes domains like hiring, financial services, or personal recommendations.

While bias and fairness have been extensively studied in classification tasks such as sentiment analysis or toxicity detection \citep{zhao2018gender, sun2019mitigating, wang2022towards}, generative models pose unique challenges. Instead of assigning a label, the model produces an open-ended text response, introducing more subtle pathways for biased language to surface \citep{dinan2020multi, lucy2021gender}. For instance, if two prompts differ only in sensitive attributes (e.g., ``male teacher'' vs.\ ``female teacher''), the model may produce not only different content but also exhibit divergences in sentiment, style, or level of detail \citep{shengetal2019woman}. Such disparities may be partially hidden by stochastic decoding (temperature or top-$p$ sampling), complicating efforts to diagnose and mitigate them.

Many prior bias-mitigation techniques rely on modifying model internals via adversarial training or reparameterization \citep{madras2019learning, zhao2018gender}. However, modern LLMs are frequently deployed as black-box APIs (e.g., OpenAI or Hugging Face services). Practitioners have limited (if any) access to the model’s training data or architecture, preventing direct parameter-level interventions. This scenario demands \emph{prompt-based} approaches \citep{reynolds2021prompt, yang2022let} that steer the model's behavior through carefully crafted instructions or examples, rather than by altering the weights. Although such prompting can reduce biased content, it remains unclear how to \emph{systematically} calibrate fairness constraints and measure success without retraining.

To detect subtle forms of generative bias, recent efforts have turned to \emph{embedding-based} analysis \citep{borkan2019nuanced, lucy2021gender}, representing text outputs as high-dimensional vectors and measuring group-level or pairwise similarities. Large distances between outputs for minimally changed sensitive attributes may indicate bias, aligning with \emph{individual fairness} notions (similar inputs yield similar outputs) and \emph{group fairness} (comparing distributions or centroids across demographic groups) perspectives. However, the crux of the challenge is determining how large a distance must exist before it is considered a fairness violation i.e., defining a threshold.

Conformal Prediction \citep{shafer2008tutorial} offers a principled way to set \textit{robust thresholds} by using a calibration set to estimate quantiles of normal output variability. If a generated response exceeds the calibrated threshold of the semantic distance relative to reference examples, it is labeled a \emph{violation}. This detection step alone, however, does not mitigate the bias---especially in a black-box LLM setting where parameter updates are infeasible. Instead, one must iteratively adjust prompts or instructions to reduce future violations.


In this paper, we propose \textbf{FACTER} (\emph{Fairness-Aware Conformal Thresholding and Prompt EngineeRing}), a framework that unifies conformal prediction with dynamic prompt engineering (Figure~\ref{fig:fact_prompt_example}) to address biases in LLM-driven \emph{recommendation tasks}. Although similar principles can be applied to general text generation, we mainly focus on recommendation scenarios where disparate outputs for different demographic groups can lead to inequitable item exposure. Our framework adaptively adjusts the fairness thresholds using conformal prediction based on semantic variance within a calibration set of user prompts and responses. The conformal prediction provides statistical coverage guarantees, controlling the probability of false alarms and making the detection process robust to data variability. Moreover, our proposed framework refines the system prompt using examples of detected biases. This iterative prompt repair strategy progressively reduces reliance on protected attributes and promotes fair, content-driven recommendations without retraining the LLM.
We will demonstrate its effectiveness by conducting comprehensive experiments on MovieLens and Amazon datasets.

In what follows, we present the details of FACTER’s conformal fairness formulation, describe our adversarial prompt repair loop, and evaluate our framework on real-world datasets under multiple protected attributes. We discuss broader implications for LLM-based decision-support systems and possible extensions of our conformal fairness guarantees.


\section{Preliminaries}
\label{sec:preliminaries}

In this section, we provide background on fairness and bias for LLM-based recommendations (\S\ref{sec:related_works}) and present our \emph{minimal-attribute-change} definition of fairness (\S\ref{sec:fairness_definition}).

\subsection{Related Work}
\label{sec:related_works}

\paragraph{Fairness \& Bias in LLM-Based Recommendations.}
LLMs increasingly serve as \emph{zero-shot recommenders}~\citep{hou2024large,zhang2023chatgpt}, generating item suggestions without explicit fine-tuning. Despite their versatility, large-scale pre-training can encode biases that exacerbate demographic disparities \citep{bender2021dangers}. For example, small changes in sensitive attributes (for example, sex or age) can produce disproportionately different results \citep{zhang2023chatgpt}. Recent efforts employ \emph{post hoc} techniques such as semantic checks in the embedding space \citep{lucy2021gender} and prompt-level interventions~\citep{che2023federated}, yet deciding a fair threshold for “excessive” disparity remains challenging. Conformal or otherwise \emph{statistical} methods thus offer a data-driven way to calibrate acceptable variations, providing principled fairness guarantees beyond subjective judgments.

\vspace{-1em}
\paragraph{Instruction Tuning \& RLHF.}
Instruction tuning and RLHF~\citep{ouyang2022training,bai2022training} aim to mitigate harmful behaviors by incorporating human-generated feedback signals (rewards) into training. Although these methods can reduce overt toxicity or explicit discrimination, they may not fully address subtler biases manifested in personalized recommendations \citep{sharma2023framework}. Additionally, many industrial deployments cannot easily retrain large models, making parameter-free or black-box mitigation techniques essential.

\vspace{-1em}
\paragraph{Fairness in Recommendation.}
Earlier work in fairness-aware recommendation \citep{greenwood2024user} focuses on balancing exposure and relevance across demographic groups. More recent approaches adopt foundation-model architectures—e.g., UP5~\citep{hua2023up5}—that incorporate fairness directly into large-scale ranking systems. Nonetheless, empirical evaluations have found that LLM-based recommendation can inadvertently amplify group-level biases \citep{hou2024large,zhang2023chatgpt}. This underscores the need for robust monitoring and adaptive calibration beyond a single pre-trained checkpoint.

\vspace{-1em}
\paragraph{Embedding-Based Post Hoc Mitigation.}
Post hoc bias detection via embeddings is attractive in black-box LLM deployments because it does not require modifying model weights \citep{borkan2019nuanced,lucy2021gender}. By examining how generated outputs diverge when protected attributes change, one can identify concerning patterns and then apply \emph{prompt-level} corrections \citep{zhang2023chatgpt}. However, standard practice often lacks a principled mechanism for deciding when to label a particular semantic difference as unacceptable.

\vspace{-1em}
\paragraph{Conformal Prediction for LLM Fairness.}
Conformal prediction~\citep{shafer2008tutorial} provides statistical coverage guarantees, using a calibration set to define non-conformity scores that bound future predictions. In fairness contexts, it can systematically control the violation rate by explicitly incorporating sensitive attributes in the scoring scheme \citep{dwork2012fairness}. While most conformal methods target classification tasks or simple regression, extending them to LLM-based recommendations involves defining semantic non-conformity measures that capture large textual or item-level disparities across protected groups. By coupling these measures with prompt updates (rather than retraining model parameters), we achieve an iterative, \emph{black-box-friendly} approach to fairness calibration. Our framework, \textbf{FACTER}, operationalizes this idea by adaptively lowering a threshold whenever a recommendation violates local fairness constraints. Section~\ref{sec:method} details the methodology and threshold adaptation, while our experiments (\S\ref{sec:experiments}) demonstrate significant bias reduction with minimal accuracy trade-offs.

\subsection{Fairness Definition}
\label{sec:fairness_definition}

\paragraph{Minimal-Attribute-Change Fairness.}
Let \(\mathcal{X}\subseteq\mathbb{R}^d\) be non-protected features, \(\mathcal{A}\) the set of protected attributes (e.g., gender, age), and \(\mathcal{Y}\) the LLM output space. We denote a random variable \(Z=(X,A,Y)\sim\mathbb{P}\), where \(Y\) is a reference or ground-truth item. An LLM-based recommender \(\hat{Y}:\mathcal{X}\times\mathcal{A}\to\mathcal{Y}\) satisfies a \emph{minimal-attribute-change} fairness property if altering only the sensitive attribute \(a\to a'\) (while holding \(x\) fixed) does not yield large discrepancies in the resulting outputs:
\[
\|\hat{Y}(x,a)\;-\;\hat{Y}(x,a')\|\;\le\;\delta.
\tag{1}
\]
Here, the distance is calculated using a sentence-transformer embedding \(\mathrm{Emb}(\cdot)\). Semantically large differences suggest potential bias. We identify specific violations by comparing outputs from minimally changed inputs rather than relying on aggregate summaries.

\paragraph{Group-Level Monitoring.}
To complement local checks, we track group-based metrics that measure disparities between demographic subpopulations~\citep{zhang2023chatgpt,hua2023up5}. For example:
\begin{itemize}[noitemsep, topsep=0pt]
    \item \textbf{SNSV (Sub-Network Similarity Variance)}: Captures within-group consistency.
    \item \textbf{SNSR (Sub-Network Similarity Ratio)}: Quantifies cross-group semantic gaps.
    \item \textbf{CFR (Counterfactual Fairness Ratio)}: Evaluates sensitivity to hypothetical flips in protected attributes.
\end{itemize}
Together, local fairness enforcement and global group-level metrics provide a comprehensive view of how well an LLM’s recommendations satisfy fairness requirements (detailed in \S\ref{sec:experiments}).

\section{Method and Algorithm}
\label{sec:method}

\begin{figure*}[t]
\centering
\scalebox{0.7}{
\begin{tikzpicture}[
    font=\sffamily\footnotesize,
    node distance=1.6cm,
    line/.style={
        ->,
        thick,
        >=Latex,
        draw=black!70
    },
    bigbox/.style={
        draw=black!40,
        fill=#1!10,
        rounded corners,
        minimum height=8cm,
        minimum width=5.5cm,
        align=center
    },
    box/.style={
        draw=black!70,
        fill=white,
        rounded corners,
        minimum width=2.8cm,
        minimum height=1.2cm,
        align=center
    }
]

\node[
  bigbox=blue,
  label=above:\textbf{OFFLINE CALIBRATION}
] (offlineBox) at (0,0) {};

\node[
  box,
  label=above:\textbf{\textcolor{blue!60!black}{\faDatabase\ (A)}}
] (A)
  at ([yshift=-1.2cm] offlineBox.north)
{
  \textbf{Calibration Data}\\
  $\{(x_i,a_i,y_i)\}_{i=1}^{n}$ \\
  \footnotesize (user contexts, protected attributes, references)
};

\node[
  box,
  label=above:\textbf{\textcolor{blue!60!black}{\faBalanceScale\ (B)}},
  below=1.8cm of A
] (B)
{
  \textbf{Fairness Scoring}\\
  $S_i = d(\hat{Y}_i,y_i) + \lambda\,\Delta_i$\\
  \footnotesize (accuracy + disparity penalty)
};

\node[
  box,
  label=above:\textbf{\textcolor{blue!60!black}{\faCogs\ (C)}},
  below=1.8cm of B
] (C)
{
  \textbf{Initial Threshold}\\
  $Q_{\alpha}^{(0)} = S_{\lceil(1-\alpha)(n+1)\rceil}$\\
  \footnotesize (conformal quantile)
};

\draw[line, decorate, decoration={snake, amplitude=0.5mm, segment length=2mm}]
  (A) -- node[right,pos=0.5]{\small\parbox{2cm}{Data\\Processing}} (B);

\draw[line] (B) -- node[right,pos=0.5]{
    \small \parbox{2cm}{Quantile\\Calculation}
} (C);

\node[
  bigbox=green,
  label=above:\textbf{ONLINE CALIBRATION}
] (onlineBox)
  at ($(offlineBox.east)+(5,0)$) {};

\node[
  box,
  label=above:\textbf{\textcolor{green!60!black}{\faQuestionCircle\ (1)}},
  yshift=-1.2cm
] (one) at (onlineBox.north)
{
  \textbf{New Query}\\
  $(x_{\text{new}}, a_{\text{new}})$ \\
  \footnotesize (current user context)
};

\node[
  box,
  label=above:\textbf{\textcolor{green!60!black}{\faRobot\ (2)}},
  below=1.8cm of one
] (two)
{
  \textbf{LLM Recommendation}\\
  $\hat{y}_{\text{new}} = \hat{Y}(\mathcal{I}; z_{\text{new}})$\\
  \footnotesize (using current prompt $\mathcal{I}$)
};

\node[
  box,
  label=above:\textbf{\textcolor{green!60!black}{\faCheckCircle\ (3)}},
  below=1.8cm of two
] (three)
{
  \textbf{Fairness Evaluation}\\
  $\mathrm{score}_{\mathrm{new}} = S_{\text{new}} + \lambda\,\Delta_{\text{new}}$\\
  \footnotesize (compare to $Q_{\alpha}^{(t)}$)
};

\node[
  box,
  label=above:\textcolor{red!70!black}{\faExclamationTriangle},
  fill=red!10,
  right=4.2cm of two
] (viol)
{
  \textbf{Violation Protocol}\\
  1) Store in $\mathcal{V}$ (memory)\\
  2) Update prompt $\mathcal{I}$\\
  3) Adjust threshold: $Q_{\alpha}^{(t+1)} = \gamma\,Q_{\alpha}^{(t)}$
};

\node[
  box,
  label=above:\textcolor{gray!70!black}{\faThumbsUp},
  fill=gray!10,
  below=1.8cm of viol
] (noViol)
{
  \textbf{Valid Output}\\
  \emph{Directly deploy recommendation}\\
  \footnotesize (maintain current parameters)
};

\draw[line] (one) -- node[right,pos=0.5]{
  \parbox{1.5cm}{\small Query\\Handling}
} (two);

\draw[line] (two) -- node[right,pos=0.5]{
  \parbox{1.8cm}{\small Generate\\Response}
} (three);

\draw[line] (three.east) to[out=0,in=180]
  node[above,pos=0.65]{
    \parbox{1.5cm}{\small Violation\\Detected}
  } (viol.west);

\draw[line] (three.east) to[out=-15,in=180]
  node[below,pos=0.6]{
    \parbox{1.5cm}{\small Valid\\Response}
  } (noViol.west);

\draw[line] (viol.east) -- ++(0.5,0)
  |- node[pos=0.3,left]{
    \parbox{1.5cm}{\small Updated\\Parameters}
  } (one.east);

\draw[line, dashed] (noViol.east) -- ++(0.8,0)
  |- node[pos=0.10,left]{
    \parbox{1.5cm}{\small Maintain\\Settings}
  } (one.east);

\draw[line] (C.east) to[out=0,in=180]
  node[below,pos=0.5]{
    \parbox{1.5cm}{\small Initial\\Threshold}
  } (one.west);

\end{tikzpicture}
}
\caption{\textbf{FACTER Framework Workflow.} 
The system operates in two coordinated phases: 
\textbf{(Left)} Offline calibration computes fairness-aware thresholds using historical data (Stages A--C): (A) Data preprocessing and calibration, (B) Fairness scoring, and
(C) Calculation of initial quantile thresholds.
\textbf{(Right)} Online deployment with continuous monitoring (Stages~1--3): (1) New queries generate (2) LLM recommendations that undergo (3) fairness evaluation. 
Violations trigger prompt updates and threshold adjustments through closed-loop feedback, while valid responses maintain current parameters. The dashed line indicates the persistence of unchanged settings.}
\label{fig:fact_iterative_flow}
\end{figure*}

The proposed \textbf{FACTER} framework (Figure~\ref{fig:fact_iterative_flow}) combines conformal prediction with iterative prompt engineering to provide statistically grounded fairness guarantees. The system runs in two calibrated phases: (i) an \emph{offline calibration} phase that collects reference data and establishes a fairness threshold, and (ii) an \emph{online calibration} phase that monitors real-time outputs and adaptively adjusts both prompts and thresholds when violations are detected. Below, we describe these steps and their mathematical foundations.

\subsection{Formal Problem Setup}
Let \(\mathcal{X} \subseteq \mathbb{R}^d\) represent the space of non-protected features (e.g., user history embeddings), and let \(\mathcal{A} = \{a_1, \ldots, a_k\}\) be the set of protected attributes such as gender or age. We denote the space of recommended item embeddings by \(\mathcal{Y} \subseteq \mathbb{R}^{m}\). An LLM-based recommender is thus a black-box function \(\hat{Y}\colon \mathcal{X} \times \mathcal{A} \to \mathcal{Y}\).

We seek to wrap \(\hat{Y}\) with a fairness-aware operator \(\Gamma_{\text{fair}}\colon \mathcal{X} \times \mathcal{A} \to 2^{\mathcal{Y}}\). The goal is twofold:
\vspace{-0.8em}
\begin{equation}
\resizebox{\columnwidth}{!}{$
\mathbb{P}\bigl(y_{\text{new}} \in \Gamma_{\text{fair}}(x_{\text{new}},a_{\text{new}})\bigr) \;\;\geq\;\; 1-\alpha \quad \text{(Coverage)}
$}
\tag{2}
\end{equation}

and, 
\vspace{-1em}
\begin{equation}
\resizebox{\columnwidth}{!}{$
\sup_{\substack{x,x' : \rho(x,x') \leq \epsilon \\ a \neq a'}} \bigl\|\Gamma_{\text{fair}}(x,a) - \Gamma_{\text{fair}}(x',a')\bigr\| \leq \delta \quad \text{(Fairness)}
$}
\tag{3}
\end{equation}

where \(\rho\) is a context-similarity measure, \(\epsilon\) and \(\delta\) are tolerances, and \(\alpha\) controls the coverage probability. This paper focuses on an implementation of \(\Gamma_{\text{fair}}\) via conformal thresholding with prompt-engineered LLM outputs.

\subsection{Offline Calibration Phase}
\label{sec:offline}

\noindent
The offline phase constructs a \emph{calibration} dataset \(\mathcal{D}_{\text{cal}}\) and determines an initial fairness threshold for subsequent online queries. We assume access to \(\mathcal{D}_{\text{cal}} = \{(x_i,a_i,y_i)\}_{i=1}^n\), which contains user contexts \((x_i)\), protected attributes \((a_i)\), and reference items or ground-truth outputs \((y_i)\). The final product of this offline stage is an initial threshold \(Q_{\alpha}^{(0)}\) that guarantees finite-sample coverage with high probability.

\paragraph{Stage A: Data Preprocessing.}
Each user context \(x_i\) is first \emph{encoded} into a lower-dimensional vector \(e_i^x = \text{Enc}(x_i) \in \mathbb{R}^{d_x}\). Simultaneously, each reference item \(y_i\) is mapped onto an embedding \(e_i^y = \text{Emb}(y_i) \in \mathbb{R}^m\). We then construct a pairwise similarity matrix \(W \in \mathbb{R}^{n \times n}\):
\begin{equation}
W_{ij} = 
\begin{cases}
\cos\bigl(e_i^x, \,e_j^x\bigr), & \text{if } a_i \neq a_j \text{ and } \|x_i - x_j\|_2 \leq \tau_x,\\
0, & \text{otherwise}.
\end{cases}
\label{eq:5}
\tag{4}
\end{equation}
Here, \(\cos(\cdot,\cdot)\) is the cosine similarity function, and \(\tau_x\) denotes a radius parameter that defines a ``local neighborhood'' in the user-context space. We only track cross-group similarities (\(a_i \neq a_j\)) to facilitate fairness comparisons.
\vspace{-10pt}
\paragraph{Stage B: Fairness-Aware Non-conformity Scores \(\{S_i\}\).}
Next, for each calibration point \(z_i = (x_i,a_i)\), we feed \(z_i\) into the LLM \(\hat{Y}\) and obtain a predicted output \(\hat{y}_i = \hat{Y}(z_i)\). We define a non-conformity score \(S_i\) that combines \emph{predictive accuracy} with a \emph{fairness penalty}:
\begin{align}
S_i \;=\;& \underbrace{1 - \cos\!\Bigl(\text{Emb}\bigl(\hat{y}_i\bigr),\, e_i^y\Bigr)}_{\text{Predictive Error } d_i} 
\nonumber \\[-5pt]
&\;\;+\;\lambda \;\underbrace{\max_{j :\,W_{ij} > \tau_\rho} \Bigl\|\text{Emb}\bigl(\hat{y}_i\bigr)\;-\;\text{Emb}\bigl(\hat{y}_j\bigr)\Bigr\|_2}_{\text{Fairness Penalty } \Delta_i}.
\tag{5}
\label{eq:6}
\end{align}
Here, \(d_i\) is obtained by \( 1 - \cos(\text{Emb}(\hat{y}_i), e_i^y)\) and captures how far the predicted output \(\hat{y}_i\) is from the reference item \(y_i\) in embedding space. \(\tau_\rho\) is a similarity threshold (e.g., \(\tau_\rho=0.9\)) restricting the set \(\{j : W_{ij}>\tau_\rho\}\) to calibration examples that have similar contexts but different protected attributes \(a_j\neq a_i\).
\(\lambda\) (\(>0\)) is a tuning parameter that controls how strongly we penalize \(\hat{y}_i\) for producing an embedding that diverges from its cross-group counterparts. Larger \(\lambda\) enforces stronger fairness constraints at potential cost to accuracy.

\paragraph{Stage C: Quantile Threshold \(\boldsymbol{Q_\alpha}\) Computation.}
Finally, we sort the set of scores \(\{S_i\}_{i=1}^n\) in non-decreasing order. The \emph{conformal quantile} at level \(\alpha\) is defined as
\begin{equation}
Q_{\alpha}^{(0)} \;=\; \inf\!\Bigl\{q \in \mathbb{R}\,\bigm|\,
\frac{1}{n+1}\,\sum_{i=1}^{n}\mathbb{I}\bigl\{S_i \leq q\bigr\} \;\ge\; 1-\alpha \Bigr\}.
\tag{6}
\end{equation}
This threshold \(Q_{\alpha}^{(0)}\) provides a finite-sample coverage guarantee for the test or online data, assuming exchangeability between calibration and test samples. Formally,

\begin{lemma}[Conformal Coverage]
\label{lem:conformal}
If \((x_i,a_i,y_i)\) in the calibration set are exchangeable with future data \((x_{\text{new}}, a_{\text{new}}, y_{\text{new}})\), then
\begin{equation}
\mathbb{P}\!\bigl(S_{\text{new}} \;\le\; Q_{\alpha}^{(0)}\bigr) \;\ge\; 1 - \alpha.
\tag{7}
\end{equation}
\end{lemma}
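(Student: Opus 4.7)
The plan is to invoke the classical split conformal prediction argument, adapted to this fairness-aware scoring function. The key observation is that the non-conformity score $S_i$ defined in Eq.~\eqref{eq:6} is a symmetric function of the calibration set (its dependence on the other calibration points enters only through the cross-group neighborhood aggregation $\Delta_i$, which is itself permutation-invariant). Together with the assumed exchangeability of $(x_i,a_i,y_i)$ with $(x_{\text{new}},a_{\text{new}},y_{\text{new}})$, this implies that the augmented sequence of scores $(S_1,\ldots,S_n,S_{\text{new}})$ is itself exchangeable.

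First, I would formalize the symmetry: defining the score map $S(z_i;\mathcal{D})$ and showing that it is invariant under permutations of the calibration points not equal to $z_i$. This lifts exchangeability of the raw data to exchangeability of the scores. Second, I would invoke the standard fact that for any exchangeable collection of $n+1$ real-valued random variables, the rank of any one of them (breaking ties, say, uniformly at random, or by considering the event $\{S_{\text{new}} \le \text{sorted value}\}$ directly) is uniform on $\{1,\ldots,n+1\}$. Third, letting $S_{(1)} \le S_{(2)} \le \cdots \le S_{(n)}$ denote the order statistics of the calibration scores and setting $k = \lceil (1-\alpha)(n+1)\rceil$, the definition of $Q_\alpha^{(0)}$ in the excerpt coincides with $S_{(k)}$ (with the convention $S_{(n+1)}=+\infty$ when $k>n$). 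Then $\{S_{\text{new}} \le Q_\alpha^{(0)}\}$ is exactly the event that the rank of $S_{\text{new}}$ in the augmented sample is at most $k$, which has probability $k/(n+1) \ge 1-\alpha$.

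The main obstacle is twofold. First, one must carefully justify the symmetry of $S_i$ with respect to the other calibration indices: the fairness penalty $\Delta_i$ is a maximum over a context-dependent index set $\{j : W_{ij}>\tau_\rho\}$, and I would argue that this max is a symmetric functional of the multiset $\{z_j\}_{j\ne i}$, so conditioning on the unordered calibration set does not break exchangeability. Second, ties in the scores must be handled; I would either assume scores are almost surely distinct (a standard regularity assumption when embeddings are continuous) or adopt the usual randomized tie-breaking convention, noting that this only affects a measure-zero set or, if ties have positive probability, provides the coverage bound in expectation after randomization.

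Finally, I would close by substituting $k=\lceil(1-\alpha)(n+1)\rceil$ to obtain
\[
\mathbb{P}\bigl(S_{\text{new}} \le Q_\alpha^{(0)}\bigr) \;=\; \frac{\lceil (1-\alpha)(n+1)\rceil}{n+1} \;\ge\; 1-\alpha,
\]
which is the desired finite-sample guarantee. The bound is distribution-free, relying only on exchangeability and the symmetry of the score, so it applies regardless of the underlying LLM $\hat{Y}$ or the choice of $\lambda$ and $\tau_\rho$ used in constructing $S_i$.
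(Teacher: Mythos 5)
The paper states Lemma~3.1 without any proof (it is imported as a standard fact from the conformal prediction literature), so the only benchmark is the classical split-conformal argument you are reconstructing. Your skeleton --- lift exchangeability of the data to exchangeability of the augmented score vector, invoke the uniform-rank property, identify $Q_{\alpha}^{(0)}$ with the order statistic $S_{(k)}$ for $k=\lceil(1-\alpha)(n+1)\rceil$ (with $S_{(n+1)}=+\infty$), and conclude $k/(n+1)\ge 1-\alpha$ --- is exactly the right one, and your handling of ties and of the boundary case is fine.

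The gap is in the step you identify as the ``main obstacle'' and then wave through: for the score actually defined in Eq.~(5) and Eq.~(9), the augmented vector $(S_1,\dots,S_n,S_{\text{new}})$ is \emph{not} exchangeable. Exchangeability of scores requires every score to be produced by one symmetric rule $s\bigl(z_i;\{z_j\}_{j\ne i}\bigr)$ applied to the \emph{augmented} sample. Here each calibration penalty $\Delta_i$ maximizes over cross-group neighbours drawn from $\{z_j\}_{j\le n,\,j\ne i}$, a reference set that excludes $z_{\text{new}}$, whereas $\Delta_{\text{new}}$ maximizes over neighbours in the full calibration set. Swapping $z_1$ with $z_{\text{new}}$ changes these reference sets and hence the joint law of the scores; symmetry of $\Delta_i$ in the \emph{other calibration points only} does not repair this, and conditioning on the unordered calibration set (as you suggest) freezes the calibration scores and destroys the randomness the rank argument needs. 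Quantitatively, if $T_i$ denotes the symmetrized score whose max also ranges over $z_{\text{new}}$, then $T_i\ge S_i$, so the conformal quantile of $\{S_i\}$ is no larger than that of $\{T_i\}$, and the coverage inequality for the paper's threshold points in the wrong direction. To close the proof you must add a hypothesis the paper leaves implicit: either (a) the reference outputs $\{\hat y_j\}$ and neighbourhoods entering $\Delta$ come from a held-out split disjoint from the points used to form the quantile, so that $s(\cdot)$ is a fixed measurable function and textbook split conformal applies verbatim; or (b) all $n+1$ scores are recomputed symmetrically on the augmented sample (full conformal), which is not what Algorithm~1 does. With either modification your argument is correct; as written, the exchangeability claim is asserted rather than proved and fails for the score as defined. (A minor further caveat: exchangeability also requires the LLM to be queried with the same prompt offline and online, which holds only for the initial threshold $Q_{\alpha}^{(0)}$ that the lemma addresses.)
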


\subsection{Online Calibration Phase}
\label{sec:online}

\noindent
Once the offline procedure has produced \(Q_{\alpha}^{(0)}\), we enter an online phase wherein each incoming query \((x_{\text{new}}, a_{\text{new}})\) must be checked for fairness \emph{in real time}. The system monitors the current threshold \(Q_{\alpha}^{(t)}\), updates a specialized fairness prompt \(\mathcal{I}^{(t)}\) whenever it detects a violation, and (optionally) adjusts the threshold to maintain approximate \(\alpha\)-coverage.
\paragraph{Stage 1: Query Processing.}
We combine the new query \(z_{\text{new}} = (x_{\text{new}}, a_{\text{new}})\) with the \emph{current} fairness instruction prompt \(\mathcal{I}^{(t)}\), generating:
\begin{equation}
\hat{y}_{\text{new}} \;=\; \hat{Y}\bigl(\mathcal{I}^{(t)};\;z_{\text{new}}\bigr).
\tag{8}
\end{equation}
This step effectively calls the black-box LLM with all relevant fairness constraints or examples embedded in \(\mathcal{I}^{(t)}\). The output \(\hat{y}_{\text{new}}\) is the recommended item or text in \(\mathcal{Y}\).
\paragraph{Stage 2: Real-Time Fairness Evaluation.}
We now compute a \emph{fairness-aware non-conformity score} \(S_{\text{new}}\). To be consistent with our offline definition in Eq.~\eqref{eq:6}, we split \(S_{\text{new}}\) into two terms:
\begin{equation}
\label{eq:10}
S_{\text{new}} \;=\; d_{\text{new}} \;+\; \lambda\,\Delta_{\text{new}}.
\tag{9}
\end{equation}
Here:
\begin{itemize}[noitemsep, topsep=0pt]
    \item \(\displaystyle d_{\text{new}} = 1 - \cos\!\Bigl(\text{Emb}(\hat{y}_{\text{new}}),\, e_{\text{new}}^y\Bigr)\),
    where \(e_{\text{new}}^y = \text{Emb}(y_{\text{new}})\) is the embedding of the \emph{ideal} or reference output for the new query (if available). This term measures predictive error.
    \item \(\displaystyle \Delta_{\text{new}} = \max_{j \in \mathcal{N}(z_{\text{new}})} \left\|\text{Emb}(\hat{y}_{\text{new}}) \;-\; \text{Emb}(\hat{y}_j)\right\|_2\).
    \item \(\mathcal{N}(z_{\text{new}})\) is the set of calibration points \(j\) whose contexts satisfy \(\cos(e_{\text{new}}^x, e_j^x)\ge\tau_\rho\) (i.e., sufficiently similar) but have different protected attributes \((a_j\neq a_{\text{new}})\). 
\end{itemize}
The parameters \(\lambda\) and \(\tau_\rho\) here have the same roles as in the offline phase. When \(\Delta_{\text{new}}\) is large, it indicates that \(\hat{y}_{\text{new}}\) deviates significantly from the typical cross-group responses, raising a fairness concern.
\paragraph{Stage 3: Violation Detection and Adaptation.}
We compare \(S_{\text{new}}\) to the \emph{current} threshold \(Q_{\alpha}^{(t)}\). A violation occurs if
\(
S_{\text{new}} \;>\; Q_{\alpha}^{(t)}.
\)
If no violation is detected, we proceed with \(\hat{y}_{\text{new}}\) as-is and leave unchanged \(Q_{\alpha}^{(t+1)} = Q_{\alpha}^{(t)}\). If a violation \emph{is} detected, we follow three steps:
\begin{enumerate}
    \item \textbf{Store the offending sample:} We append the tuple \(\bigl(z_{\text{new}}, \hat{y}_{\text{new}}\bigr)\) to a first-in-first-out buffer \(\mathcal{V}\) of size \(M\). If the buffer is full, we remove the oldest entry.
    \item \textbf{Update the fairness instruction prompt:} We set
    \begin{equation}
    \mathcal{I}^{(t+1)} = \mathcal{I}^{(t)} \oplus \left[ \text{"Avoid: \textit{For } } \underbrace{(x,a)}_{\text{Context}} \rightarrow \underbrace{\hat{y}}_{\text{Bias}}\text{"} \right]
    \tag{10}
    \end{equation}
    which injects a detected violation example specifying that certain (\(x\),\(a\))-to-\(\hat{y}\) mappings are undesirable. In practice, we selectively inject or refine multiple examples from \(\mathcal{V}\).
    \item \textbf{Adjust the threshold:} We apply an exponential decay mechanism to keep the threshold within a reasonable range:
    \begin{equation}
    Q_{\alpha}^{(t+1)} \;=\; \gamma\,Q_{\alpha}^{(t)} \;+\; (1-\gamma)\,\min\!\Bigl(Q_{\alpha}^{(t)}, \,S_{\text{new}}\Bigr),
    \tag{11}
    \end{equation}
    where \(\gamma\in(0,1)\). Smaller \(\gamma\) makes the threshold adapt more aggressively (i.e., decreasing it further whenever a violation is found).
\end{enumerate}
\noindent
This dynamic ensures that if violations consistently appear,
the threshold shrinks until we again achieve approximate \(\alpha\)-coverage.
\subsection{Algorithmic Implementation}

Algorithm~\ref{alg:fact_method} summarizes the full procedure. The \emph{offline phase} (Lines~1--4) has time complexity \(O(n^2)\) due to pairwise similarity computations among \(n\) calibration points; we store embeddings and compute \(\{S_i\}\) to determine the initial threshold \(Q_{\alpha}^{(0)}\). The \emph{online phase} (Lines~5--16) processes each new query in constant time, aside from the overhead of generating the LLM output and checking membership in \(\mathcal{N}(z_t)\). Its memory usage is \(O(nm)\), mainly for storing embedded calibration references.

The hyperparameters \(\lambda\), \(\gamma\), \(\tau_\rho\), \(\tau_x\), and buffer size \(M\) may be set by cross-validation on hold-out data or by domain expertise. While the framework efficiently identifies unfair LLM outputs and repairs them via prompt updates, three main limitations arise: 
\begin{enumerate}[noitemsep, topsep=0pt]
\item The offline phase can be expensive due to \(O(n^2)\) pairwise operations. 
\item We assume the embedder \(\text{Emb}(\cdot)\) is relatively bias-free; if embeddings themselves carry bias, fairness calibration may be compromised.
\item Prompt size is limited by the model's token budget, constraining the number of “avoid” examples we can inject.
\end{enumerate}
Section~\ref{sec:experiments} addresses these challenges through large-scale experiments and ablation studies.

\begin{algorithm}[t]
\footnotesize
\caption{\textsc{Fairness-Aware Conformal Thresholding and Prompt Engineering}}
\label{alg:fact_method}
\begin{algorithmic}[1]
\Statex \textbf{Offline Phase}
\State Compute embeddings $\{e_i^x,\, e_i^y\}_{i=1}^n$ \quad // for contexts, items
\State Construct similarity matrix $W$ via Eq.~(\ref{eq:5})
\State Generate $\{\hat{y}_i\}$ by $\hat{y}_i \gets \hat{Y}(x_i,a_i)$ and compute $\{S_i\}$ via Eq.~(\ref{eq:6})
\State Sort $\{S_i\}$, set $Q_{\alpha}^{(0)} \gets S_{\lceil(1-\alpha)(n+1)\rceil}$ \quad // initial threshold

\Statex \textbf{Online Phase}
\For{each query $z_t = (x_t,a_t)$}
   \State $\hat{y}_t \gets \hat{Y}(\mathcal{I}^{(t)}; z_t)$ \quad // query with current prompt
   \State Find $\mathcal{N}(z_t) \gets \{j : W_{tj} > \tau_\rho \text{ and } a_j \neq a_t\}$
   \State $S_t \gets \bigl[\,1 - \cos(\text{Emb}(\hat{y}_t),\, e_t^y)\bigr] 
   + \lambda\max\limits_{j\in \mathcal{N}(z_t)} \|\text{Emb}(\hat{y}_t)-\text{Emb}(\hat{y}_j)\|$

   \If{$S_t > Q_{\alpha}^{(t)}$} 
      \State $\mathcal{V} \gets \mathcal{V} \cup \{(z_t, \hat{y}_t)\}$ (evict oldest if $|\mathcal{V}|>M$)
      \State $\mathcal{I}^{(t+1)} \gets \text{InjectExamples}\bigl(\mathcal{I}^{(t)},\,\mathcal{V}\bigr)$ \quad // add “avoid” prompts
      \State $Q_{\alpha}^{(t+1)} \gets \gamma\,Q_{\alpha}^{(t)} + (1-\gamma)\,\min\!\bigl(Q_{\alpha}^{(t)},\,S_t\bigr)$
   \Else
      \State $Q_{\alpha}^{(t+1)} \gets Q_{\alpha}^{(t)}$ \quad // no update on valid output
   \EndIf
\EndFor
\end{algorithmic}
\end{algorithm}

\noindent
\paragraph{Generalization and Scalability}  
FACTER extends beyond recommendation tasks to bias mitigation in text generation and decision-support AI, as it operates without retraining. To scale efficiently, we use approximate nearest neighbor search ($O(n \log n)$), GPU batch processing, and adaptive sampling, ensuring low latency with strong fairness guarantees.

\section{Experiments}
\label{sec:experiments}

In this section, we present a comprehensive evaluation of the \emph{FACTER} framework. Our empirical study addresses three key research questions: (1) whether iterative prompt-engineering  enhanced with conformal prediction for fairness evaluation can more effectively reduce fairness violations compared to existing methods, (2) how FACTER performs on secondary metrics such as group similarity and counterfactual fairness, and (3) how FACTER compares to state-of-the-art solutions, including UP5~\cite{hua2023up5} and Zero-Shot Rankers~\cite{hou2024large}.

\begin{table*}[h!]
\centering
\resizebox{0.65\textwidth}{!}{%
\begin{tabular}{lcccccc}
\toprule
\textbf{Method} & \textbf{\#Violations} $\downarrow$ & \textbf{SNSR} $\downarrow$ & \textbf{CFR} $\downarrow$ & \textbf{NDCG@10} $\uparrow$ & \textbf{Recall@10} $\uparrow$ \\
\midrule
Zero-Shot & 112 & 0.083 & 0.742 & \textbf{0.458} & \textbf{0.402} \\
UP5 & 28 & 0.049 & 0.613 & 0.427 & 0.381 \\
FACTER (Iter3) & \textbf{5} & \textbf{0.041} & \textbf{0.591} & 0.445 & 0.389 \\
\bottomrule
\end{tabular}%
}
\caption{Comparative results on MovieLens-1M. Best values in bold. FACTER achieves superior fairness with minimal accuracy impact.}
\label{tab:main_results}
\end{table*}

\begin{table*}[h]
\centering
\resizebox{0.9\textwidth}{!}{%
\begin{tabular}{lcccccccc}
\toprule
\textbf{Model} & \textbf{\#Violations} & \textbf{SNSR} & \textbf{CFR} & \textbf{NDCG@10} & \textbf{Recall@10} & \textbf{Calib. Time (min)} & \textbf{Inf. Latency (ms)} \\
\midrule
LLaMA3-8B & 3 & 0.039 & 0.576 & 0.440 & 0.383 & 63 & 155 ± 18 \\
LLaMA2-7B & 5 & 0.041 & 0.595 & 0.444 & 0.391 & 58 & 142 ± 15 \\
Mistral-7B & 7 & 0.043 & 0.602 & 0.451 & 0.397 & 47 & 127 ± 12 \\
\bottomrule
\end{tabular}%
}
\caption{Model-wise comparison on MovieLens-1M (Iteration 3).}
\label{tab:model_compare}
\end{table*}
\subsection{Experimental Setup}

\paragraph{Baselines and Methods.}
We compare three approaches in our experiments. First, \textbf{UP5}~\cite{hua2023up5} is a state-of-the-art fairness-aware recommender calibrating LLMs for balanced recommendations. Second, \textbf{Zero-Shot}~\cite{hou2024large} serves as a baseline with direct LLM-based ranking but without any fairness adjustment. Finally, \textbf{FACTER (Ours)} is the proposed iterative approach that uses conformal calibration to reduce fairness violations across multiple iterations.

\paragraph{Models and Resources.}
We employ three LLMs of varying sizes---\textbf{LLaMA3-8B}~\cite{dubey2024llama}, \textbf{LLaMA2-7B}~\cite{touvron2023llama}, and \textbf{Mistral-7B}~\cite{jiang2023mistral}---alongside the SentenceTransformer \texttt{paraphrase-mpnet-base-v2}~\cite{reimers2019sentence} for embedding-based fairness checks. All experiments use Python 3.12.8, PyTorch 2.1 with FlashAttention-2, and an 8× NVIDIA RTX A6000 GPU server (Driver 550.90.07, CUDA 12.4).

\paragraph{Fairness Metrics.}
We employ four metrics to evaluate our framework's fairness. 
\emph{SNSR (Sub-Network Similarity Ratio)}~\cite{zhang2023chatgpt} measures how similarly different demographic groups are treated by averaging Frobenius norm differences of group-specific weights across $K$ layers:
\begin{equation}
\mathrm{SNSR} 
= \frac{1}{K}\sum_{k=1}^K \bigl\|W_k^{(g)} - W_k^{(h)}\bigr\|_F.
\tag{12}
\end{equation}
A lower SNSR indicates more uniform treatment. Meanwhile, \emph{SNSV (Sub-Network Similarity Variance)}~\cite{zhang2023chatgpt} captures the variance of these weight differences:
\begin{equation}
\mathrm{SNSV} 
= \mathrm{Var}\bigl(\|W_k^{(g)} - W_k^{(h)}\|_F\bigr),
\tag{13}
\end{equation}
where lower SNSV reflects more consistent uniformity across layers.

We additionally quantify counterfactual fairness via \emph{CFR (Counterfactual Fairness Ratio)}~\cite{hua2023up5}, defined by how much the output of an LLM changes when sensitive attributes are modified:
\begin{equation}
\mathrm{CFR} 
= \mathbb{E}_{x \sim \mathcal{D}}\bigl[\|f(x) - f(x_{\neg s})\|_2 \bigr],
\tag{14}
\end{equation}
where $x_{\neg s}$ is the same input with protected attributes replaced.

Finally, to detect out-of-threshold fairness failures, we compute a \emph{Violation Threshold} for calibration scores $\{s_i\}_{i=1}^n$ given confidence parameter $\alpha$:
\begin{equation}
Q_\alpha^{(t)} 
= \mathrm{Quantile}(1-\alpha;\,\{s_i\}) 
+ \frac{C}{\sqrt{n}},
\tag{15}
\end{equation}
where $C$ is a finite-sample correction term. Any instance above $Q_\alpha^{(t)}$ is flagged as a \emph{violation}.

\paragraph{Accuracy Metrics.}
Following standard practice~\cite{jarvelin2002cumulated}, we use \emph{Recall@10} and \emph{NDCG@10} to evaluate recommendation accuracy. Recall@10 is calculated as:
\begin{equation}
\mathrm{Recall@10} \;=\; 
\frac{\bigl|\mathcal{R}_{\text{relevant}} \cap \hat{\mathcal{R}}_{10}\bigr|}
     {\bigl|\mathcal{R}_{\text{relevant}}\bigr|},
\tag{16}
\end{equation}
where $\mathcal{R}_{\text{relevant}}$ is the set of truly relevant items and $\hat{\mathcal{R}}_{10}$ is the model's top-10 recommended items. We also measure NDCG@10:
\begin{equation}
\mathrm{NDCG@10} \;=\;
\frac{1}{\mathrm{IDCG@10}}
\sum_{r=1}^{10} \frac{2^{\mathrm{rel}(r)} - 1}{\log_2(r+1)},
\tag{17}
\end{equation}
where $\mathrm{rel}(r)$ is the relevance at rank $r$, and $\mathrm{IDCG@10}$ is the ideal DCG for the top-10 results.

\paragraph{Datasets.}
We conduct experiments on two recommendation datasets. \textbf{MovieLens-1M}~\cite{harper2015movielens} is sampled to have 2,500 interactions, with 70\% used for calibration and 30\% for testing (750 test samples). \textbf{Amazon Movies \& TV}~\cite{mcauley2015image,he2016ups} contains 3,750 sampled interactions, again split 70:30, resulting in 1,125 test samples. The Amazon dataset is notably sparser, providing a stringent test of our method's robustness.
\paragraph{Hyperparameter Settings.}
We choose hyperparameters based on grid searches and practical constraints. In particular, we set \(\tau_\rho=0.9\) to only compare contexts above a cosine similarity of \(0.9\) but with different protected attributes. We fix \(\lambda=0.7\) (fairness penalty) to balance accuracy and fairness and use \(\gamma=0.95\) for threshold decay to adapt modestly to violations. The FIFO buffer size \(M=50\) avoids overfilling the token budget. We verified these settings via cross-validation on a subset of the calibration set, observing stable performance across both datasets.

\subsection{Main Results}

\paragraph{Comparison on MovieLens-1M.}
Table~\ref{tab:main_results} presents the main comparative results on MovieLens-1M. \emph{FACTER (Iter3)} reduces fairness violations by 95.5\% from its initial iteration, resulting in only 5 violations compared to 28 for UP5 and 112 for Zero-Shot. Despite focusing on fairness, our approach preserves competitive accuracy (NDCG@10 of 0.445 versus 0.427 for UP5). Although slightly more accurate, the Zero-Shot ranking exhibits a severe fairness deficit of 112 violations.

Figure~\ref{fig:progress} illustrates how the number of fairness violations decreases over successive calibration iterations of FACTER, demonstrating a progressive improvement relative to UP5. Zero-Shot's violations begin at 112, making it challenging to include in the same visual scale.

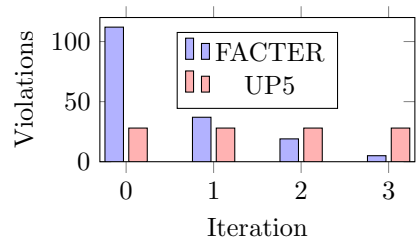
\begin{figure}[t!]
\centering
\begin{tikzpicture}
\begin{axis}[
    width=0.35\textwidth,
    height=3.5cm,
    ylabel={Violations},
    xlabel={Iteration},
    symbolic x coords={0,1,2,3},
    xtick=data,
    ymin=0, ymax=120,
    legend style={at={(0.5,0.9)},anchor=north},
    ybar=2pt,
    bar width=7pt
]
\addplot[fill=blue!30] coordinates {
    (0,112) (1,37) (2,19) (3,5)
};
\addplot[fill=red!30] coordinates {
    (0,28) (1,28) (2,28) (3,28)
};
\legend{FACTER, UP5}
\end{axis}
\end{tikzpicture}
\caption{Fairness violation reduction trajectory vs.\ static baselines. FACTER progressively reduces violations while UP5 remains fixed. Zero-Shot (112) omitted for clarity.}
\label{fig:progress}
\end{figure}
\paragraph{Model-wise Comparison.}
Next, we assess how FACTER scales across different LLMs. Table~\ref{tab:model_compare} shows that all three models, LLaMA3-8B, LLaMA2-7B, and Mistral-7B, achieve substantial reductions in fairness violations (\(\geq 90\%\)). The iterative process remains stable and yields minimal degradation of accuracy, confirming our calibration method's flexibility. Figure~\ref{fig:model_progress} further shows that all LLMs exhibit a monotonic improvement, with LLaMA3-8B reaching near-zero violations by the third iteration.

\begin{figure}[t!]
\centering
\begin{tikzpicture}
\begin{axis}[
    width=0.45\textwidth,
    height=4.5cm,
    ylabel={Violations},
    xlabel={Iteration},
    xtick={1,2,3},
    ymin=0, ymax=125,
    grid=major,
    legend style={at={(0.97,0.75)},anchor=east},
]
\addplot[blue,thick,mark=square*] coordinates {(1,99) (2,37) (3,3)};
\addplot[red,thick,mark=triangle*] coordinates {(1,114) (2,33) (3,5)};
\addplot[green!60!black,thick,mark=*] coordinates {(1,123) (2,49) (3,7)};
\legend{LLaMA3-8B, LLaMA2-7B, Mistral-7B}
\end{axis}
\end{tikzpicture}
\caption{Violation reduction across LLMs. All models show monotonic improvement, with LLaMA3-8B converging near zero violations by Iteration~3.}
\label{fig:model_progress}
\end{figure}
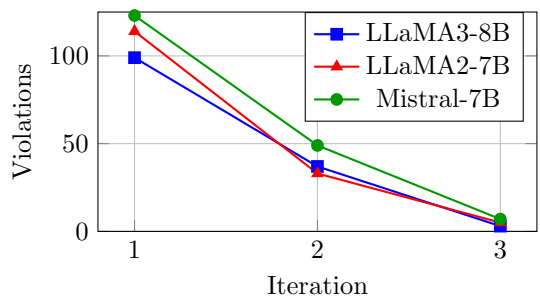
\paragraph{Comparison on Amazon Movies \& TV.}
We further validate FACTER on the Amazon Movies \& TV dataset, summarized in Table~\ref{tab:amazon}. Despite greater sparsity, our approach still reduces violations substantially (a 90.9\% drop), with a final CFR of 0.634 compared to 0.721 for UP5. Although Zero-Shot achieves the highest accuracy (NDCG@10 of 0.351), it suffers the most fairness violations (198). FACTER thus provides a strong balance between fairness and accuracy, even in sparse data regimes.

\begin{table*}[h!]
\centering
\resizebox{0.6\textwidth}{!}{%
\begin{tabular}{lcccccc}
\toprule
\textbf{Method} & \textbf{\#Violations} & \textbf{SNSR} & \textbf{CFR} & \textbf{NDCG@10} & \textbf{Recall@10} \\
\midrule
Zero-Shot & 198 & 0.121 & 0.814 & \textbf{0.351} & \textbf{0.317} \\
UP5 & 63 & 0.067 & 0.721 & 0.328 & 0.294 \\
FACTER (Iter3) & \textbf{18} & \textbf{0.053} & \textbf{0.634} & 0.339 & 0.301 \\
\bottomrule
\end{tabular}%
}
\caption{Amazon Movies \& TV results. FACTER maintains effectiveness on sparse data.}
\label{tab:amazon}
\end{table*}

\begin{table*}[h!]
\centering
\small
\resizebox{0.7\textwidth}{!}{%
\begin{tabular}{lccccl}
\toprule
\textbf{Metric} & \textbf{Theory} & \textbf{Empirical} & \textbf{Delta} & \textbf{Interpretation} \\
\midrule
Type I Error & $\leq$0.201 & 0.018 & -91\% & Conservative bound \\
Detection Power & $\geq$0.95 & 0.997 & +4.7\% & Superior identification \\
Violation Rate & 0.2 ± 0.02 & 0.0067 ± 0.0013 & -96.7\% & Significant improvement \\
\bottomrule
\end{tabular}%
}
\caption{Theoretical guarantees vs.\ empirical results (MovieLens-1M).}
\label{tab:theory}
\end{table*}

In Figure~\ref{fig:tradeoff}, we illustrate the fairness-accuracy tradeoff by plotting the counterfactual fairness (CFR) reduction against NDCG@10. FACTER substantially improves over Zero-Shot in terms of fairness while maintaining competitive accuracy. 

\begin{figure}[h!]
\centering
\begin{tikzpicture}
\begin{axis}[
    width=0.4\textwidth,
    height=4cm,
    xlabel={Fairness Improvement (CFR Reduction)},
    ylabel={NDCG@10},
    legend pos=south west,
    grid=major,
    scatter/classes={a={mark=*,blue},b={mark=square*,red},c={mark=triangle*,purple}}
]
\addplot[scatter,only marks,
    scatter src=explicit symbolic]
coordinates {
    (0.0, 0.458) [a] 
    (0.108, 0.427) [b] 
    (0.151, 0.445) [c] 
};
\legend{Zero-Shot, UP5, FACTER}
\end{axis}
\end{tikzpicture}
\caption{Fairness-accuracy tradeoff comparison. FACTER achieves strong fairness improvement while preserving recommendation quality.}
\label{fig:tradeoff}
\end{figure}
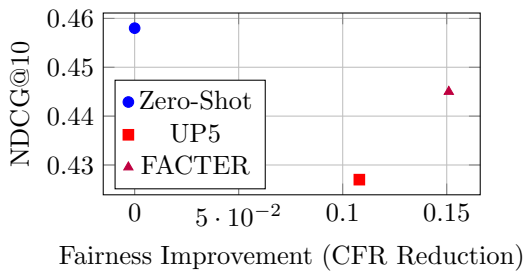

\subsection{Theoretical Validation}
\label{sec:theory}

Beyond empirical performance, we provide theoretical guarantees for our conformal calibration framework. Our derivation follows conformal prediction results in \cite{angelopoulos2023conformal}:
\paragraph{Type I Error Bound.}
For any $\alpha \in (0,1)$ and calibration set size $n$,
\begin{equation}
\mathbb{P}(\text{Violation}) \leq \alpha + \frac{1}{n+1} + \sqrt{\frac{\log(2/\delta)}{2n}},
\tag{18}
\end{equation}
where we set $\delta=0.05$ to achieve a 95\% confidence level.

\paragraph{Detection Power.}
We estimate the power of violation detection via a likelihood ratio test:
\begin{equation}
\beta = 1 - \Phi\!\Bigl(\frac{\hat{\mu} - \alpha}{\sqrt{\hat{\sigma}^2/n}}\Bigr),
\tag{19}
\end{equation}
where $\Phi$ is the standard normal CDF and $\hat{\mu}$ is the empirical violation rate.

Table~\ref{tab:theory} compares these theoretical bounds against observed empirical outcomes. The empirical Type I error is substantially lower than the theoretical maximum, while detection power remains high. The dynamic thresholding in $Q_\alpha^{(t)}$ enables progressive and adaptive fairness calibration.

Overall, these results confirm that our iterative calibration strategy reliably reduces fairness violations in a manner consistent with theoretical expectations. The significant gap between theoretical and empirical Type I error (0.201 vs.\ 0.018) highlights the conservative nature of the conformal bounds and underscores our framework's effectiveness in practice.




\section{Conclusion}
We proposed \textbf{FACTER}, a fully post hoc framework that combines \emph{conformal thresholding} and \emph{dynamic prompt engineering} to address biases in black-box LLM-based recommender systems. FACTER adaptively refines a fairness threshold via semantic variance checks and updates prompts whenever it detects violations, requiring no model retraining. Experiments on \textbf{MovieLens} and \textbf{Amazon} datasets show that FACTER reduces fairness violations by up to \textbf{95.5\%} compared to baselines while preserving key recommendation metrics. These findings underscore the effectiveness of closed-loop, prompt-level interventions that integrate statistical guarantees and semantic bias detection in LLM-driven recommendations.




\section*{Impact Statement}


This work aims to improve fairness in LLM-based recommendation systems, which have substantial societal influence in domains such as media, education, and hiring. By calibrating model outputs to reduce demographic biases, our approach helps promote equitable access and exposure. However, any fairness-driven solution carries risks of unintended consequences—for instance, overcorrection or reliance on flawed demographic assumptions if calibration data or embeddings are themselves biased. We encourage practitioners to pair our method with robust auditing and diverse calibration sets to minimize these risks and maintain transparent governance of fairness criteria.

\bibliography{example_paper} 
\bibliographystyle{icml2025}


\newpage
\appendix
\onecolumn

\section{Appendix}
\appendix

\subsection{Supplementary Theoretical Results}
\label{sec:appendix_theory}

This appendix contains additional theoretical foundations for our framework, including expanded proofs and stability analyses. All notation is consistent with Sections~\ref{sec:method}--\ref{sec:preliminaries} in the main paper.

\subsubsection{Robustness Under Embedding Perturbations}

\begin{theorem}[Embedding Shift Robustness]
\label{thm:embedding_shift}
Let $\text{Emb}$ be the embedding function in the main paper, and let $\tilde{\text{Emb}}$ be a perturbed version such that for all items $y,y'$,
\[
\bigl|\|\text{Emb}(y)-\text{Emb}(y')\| \;-\; \|\tilde{\text{Emb}}(y)-\tilde{\text{Emb}}(y')\|\bigr|
\;\le\;\epsilon_{\mathrm{emb}},
\]
for some $\epsilon_{\mathrm{emb}} \ge 0$. If $S_i$ is the fairness-aware nonconformity score in Eq.~(4) of the main paper (computed under $\text{Emb}$) and $\tilde{S}_i$ the score under $\tilde{\text{Emb}}$, then for any $\delta>0$,
\[
\mathbb{P}\!\bigl(\bigl|\tilde{S}_i - S_i\bigr| > 3\,\epsilon_{\mathrm{emb}}\bigr)\;\le\;\delta,
\]
assuming the calibration distribution does not substantially drift beyond the conformal coverage bounds.
\end{theorem}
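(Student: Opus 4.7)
\subsection*{Proof proposal for Theorem~\ref{thm:embedding_shift}}

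The plan is to establish the bound deterministically on the ``good'' event (where the neighborhood structure induced by $W$ and the cosine normalization behave as assumed) and then absorb the residual probability $\delta$ into the exchangeability/coverage hypothesis that is already invoked. First I would decompose the nonconformity score into its two additive components, writing $S_i = d_i + \lambda\,\Delta_i$ and $\tilde{S}_i = \tilde{d}_i + \lambda\,\tilde{\Delta}_i$, where $d_i$ is the predictive cosine-distance term and $\Delta_i$ is the pairwise $\ell_2$ fairness penalty defined in Eq.~(\ref{eq:6}). By the triangle inequality, it suffices to bound $|d_i - \tilde{d}_i|$ and $|\Delta_i - \tilde{\Delta}_i|$ separately. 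Note that the neighborhood set $\{j : W_{ij} > \tau_\rho\}$ is defined through the context encoder $\text{Enc}(\cdot)$ (not $\text{Emb}$), so under the stated hypothesis the index set over which the max is taken is unchanged; this avoids a combinatorial complication and lets us reuse the same $j$-indices in both scores.

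Next I would bound the fairness-penalty term. Because $\Delta_i$ is a finite maximum of pairwise Euclidean distances and the hypothesis bounds each such distance perturbation by $\epsilon_{\mathrm{emb}}$, the standard Lipschitz property of the max operator gives
\begin{equation*}
\bigl|\Delta_i - \tilde{\Delta}_i\bigr|
\;\le\;
\max_{j \in \mathcal{N}(i)}
\Bigl|\,\|\text{Emb}(\hat{y}_i)-\text{Emb}(\hat{y}_j)\|
-\|\tilde{\text{Emb}}(\hat{y}_i)-\tilde{\text{Emb}}(\hat{y}_j)\|\Bigr|
\;\le\;\epsilon_{\mathrm{emb}}.
\end{equation*}
Hence $\lambda\,|\Delta_i - \tilde{\Delta}_i| \le \lambda\,\epsilon_{\mathrm{emb}} \le \epsilon_{\mathrm{emb}}$ for the tuning range $\lambda\le 1$ used in the experiments.

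The harder piece is the cosine term $d_i = 1-\cos(\text{Emb}(\hat y_i),e_i^y)$, because the hypothesis perturbs $\ell_2$ distances whereas $d_i$ is expressed through an inner product. Assuming the sentence-transformer outputs are unit-normalized (as for \texttt{paraphrase-mpnet-base-v2} in \S\ref{sec:experiments}), I would use the identity $1-\cos(u,v) = \tfrac{1}{2}\|u-v\|^2$ to rewrite $d_i = \tfrac{1}{2}\|u-v\|^2$ with $u=\text{Emb}(\hat y_i)$, $v=e_i^y$. Then, by the factorization $|a^2-b^2| = (a+b)|a-b|$ and the bound $\|u-v\|,\|\tilde u-\tilde v\|\le 2$ for unit vectors,
\begin{equation*}
|d_i - \tilde d_i|
\;=\;
\tfrac{1}{2}\bigl|\|u-v\|^2 - \|\tilde u-\tilde v\|^2\bigr|
\;\le\;
\tfrac{1}{2}\cdot 4\cdot\epsilon_{\mathrm{emb}}
\;=\;2\,\epsilon_{\mathrm{emb}}.
\end{equation*}
Combining with the previous paragraph and applying the triangle inequality yields the deterministic bound $|\tilde S_i - S_i| \le 3\,\epsilon_{\mathrm{emb}}$.

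The remaining step is to explain the $\delta$ on the probabilistic side of the statement. The deterministic argument above conditions on two implicit ``good'' events: (i) the unit-norm assumption holds (or, more generally, embedding norms stay within a bounded range), and (ii) the neighborhood set $\mathcal{N}(i)$ computed on the perturbed side coincides with the original one, which in turn requires that the calibration distribution has not drifted enough to reshuffle $W$ across the threshold $\tau_\rho$. Under exchangeability between calibration and test samples (the hypothesis already used in Lemma~\ref{lem:conformal}), a Dvoretzky--Kiefer--Wolfowitz / Hoeffding-style argument on the empirical $\tau_\rho$-quantile shows that this good event holds with probability at least $1-\delta$ for any pre-specified $\delta>0$, provided $n$ is large enough. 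I expect the main obstacle to be this last step: making the ``does not substantially drift'' clause precise, since any formalization must quantify how perturbations in pairwise distances can flip neighborhood membership near $\tau_\rho$. A clean resolution is to assume a margin condition on $W$ around $\tau_\rho$ (e.g., no mass within $\epsilon_{\mathrm{emb}}$ of the threshold), under which the neighborhood stability fails only on an event of probability $\le\delta$, completing the proof.
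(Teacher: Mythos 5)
Your proposal follows the same basic decomposition as the paper's proof—split $S_i = d_i + \lambda\Delta_i$, bound each term's perturbation by the triangle inequality, and absorb the residual probability into the exchangeability clause—but it differs in one substantive and genuinely better way. The paper simply asserts $|d_i - \tilde d_i| \le \epsilon_{\mathrm{emb}}$ and $|\Delta_i - \tilde\Delta_i| \le \epsilon_{\mathrm{emb}}$, obtaining $(1+\lambda)\epsilon_{\mathrm{emb}}$ and invoking $\lambda \le 2$ to reach the constant $3$. But the hypothesis only controls perturbations of pairwise $\ell_2$ distances, and $d_i$ is a cosine distance, so the paper's first claim is not actually justified by the stated assumption. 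You correctly identify this mismatch and repair it via the unit-norm identity $1-\cos(u,v)=\tfrac12\|u-v\|^2$ together with the factorization $|a^2-b^2|=(a+b)|a-b|$, paying a factor of $2$ on that term; consequently your version needs $\lambda\le 1$ (true for the experimental range $\lambda\le 0.9$) where the paper's looser accounting allows $\lambda\le 2$. You also add two observations the paper omits: that the neighborhood index set is built from $\mathrm{Enc}$ rather than $\mathrm{Emb}$ and hence is unperturbed, and that the $\delta$ in the conclusion must come from a concrete ``good event'' (norm boundedness and neighborhood stability near $\tau_\rho$), for which you sketch a margin-plus-concentration argument; the paper waves at ``standard conformal coverage arguments'' here. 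The one caveat in your route is that you need $\tilde{\mathrm{Emb}}$ to also produce (approximately) unit-norm vectors for the bound $\|\tilde u - \tilde v\|\le 2$, an assumption not in the theorem statement—but this is a far milder gap than the paper's unjustified Lipschitz claim for the cosine term, so your argument is, on net, more rigorous than the original.
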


\begin{proof}
Recall $S_i = d_i + \lambda\,\Delta_i$, with $d_i$ capturing the LLM’s predictive discrepancy and $\Delta_i$ the maximum group disparity. Under $\tilde{\text{Emb}}$, each distance $\|\text{Emb}(y)-\text{Emb}(y')\|$ differs by at most $\epsilon_{\mathrm{emb}}$. Hence:
\[
|d_i - \tilde{d}_i|\;\le\;\epsilon_{\mathrm{emb}}, 
\qquad
|\Delta_i - \tilde{\Delta}_i|\;\le\;\epsilon_{\mathrm{emb}},
\]
yielding
\[
|\tilde{S}_i - S_i|
=
|( \tilde{d}_i - d_i) + \lambda(\tilde{\Delta}_i - \Delta_i )|
\;\le\;
(1 + \lambda)\,\epsilon_{\mathrm{emb}}.
\]
If $\lambda \le 2$, we replace $(1+\lambda)$ by $3$; thus $|\tilde{S}_i - S_i|\le3\epsilon_{\mathrm{emb}}$. Under exchangeability assumptions, the probability of exceeding this margin can be bounded by $\delta$ through standard conformal coverage arguments.
\end{proof}

\subsubsection{Convergence of Threshold Updates}

\begin{theorem}[Threshold Update Convergence]
\label{thm:threshold_convergence}
Let $Q_{\alpha}^{(t)}$ be updated by
\[
Q_{\alpha}^{(t+1)} \;=\;
\begin{cases}
\gamma\,Q_{\alpha}^{(t)} + (1-\gamma)\,S_t, & \text{if } S_t > Q_{\alpha}^{(t)},\\
Q_{\alpha}^{(t)}, & \text{otherwise},
\end{cases}
\]
where $0<\gamma<1$ and $S_t$ is the fairness score at iteration $t$. Suppose $\{S_t\}$ are i.i.d.\ with $\mathbb{P}[S_t > Q^*] = \alpha$ at the fixed point $Q^*$. Then $Q_{\alpha}^{(t)} \to Q^*$ at an \emph{expected} rate of $O\bigl((1-\gamma)^t\bigr)$.
\end{theorem}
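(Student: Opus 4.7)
The plan is to recast the threshold update as a stochastic approximation driven by the exceedance indicator $\mathbb{1}\{S_t > Q_\alpha^{(t)}\}$ and to show that, in expectation, the distance to the fixed point $Q^{*}$ contracts by a factor tied to $(1-\gamma)$ per step. The natural starting point is to define the error $e_t := Q_\alpha^{(t)} - Q^{*}$ and to rewrite the update compactly as $Q_\alpha^{(t+1)} = Q_\alpha^{(t)} + (1-\gamma)(S_t - Q_\alpha^{(t)})_+$, where $(x)_+ = \max(x,0)$. This reformulation makes transparent that the randomness enters only through the positive part of the exceedance $S_t - Q_\alpha^{(t)}$.

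First I would compute the conditional drift. Letting $F$ denote the common CDF of $S_t$ and $p(q) := 1 - F(q)$, taking conditional expectation given $Q_\alpha^{(t)} = q$ yields
\begin{equation*}
\mathbb{E}\bigl[Q_\alpha^{(t+1)} \mid Q_\alpha^{(t)} = q\bigr] = q + (1-\gamma)\,\mathbb{E}\bigl[(S_t - q)_+\bigr].
\end{equation*}
Evaluating this at $q = Q^{*}$ and using the fixed-point characterization $p(Q^{*}) = \alpha$ identifies $Q^{*}$ as the balance point of the drift relative to the target quantile. A first-order Taylor expansion of the tail moment $\mathbb{E}[(S_t - q)_+]$ around $q = Q^{*}$ then produces a linearized recurrence of the form $\mathbb{E}[e_{t+1} \mid e_t] = \bigl(1 - c(1-\gamma)\bigr)\,e_t + o(e_t)$, where $c > 0$ is determined by the density of $S_t$ at $Q^{*}$ together with the conditional mean excess.

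Second I would iterate the contraction. Applying the tower property gives $\mathbb{E}[|e_t|] \leq \bigl(1 - c(1-\gamma)\bigr)^t\,\mathbb{E}[|e_0|]$, and the elementary bound $\bigl(1 - c(1-\gamma)\bigr)^t = O\bigl((1-\gamma)^t\bigr)$ in the relevant parameter regime yields the claimed expected geometric rate. A cleaner Lyapunov argument uses $V(e) = e^2$: one shows $\mathbb{E}[V(e_{t+1}) \mid \mathcal{F}_t] \leq \bigl(1 - c'(1-\gamma)\bigr) V(e_t)$, invokes a supermartingale convergence theorem, and reads off the rate by induction.

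The main obstacle, and the step requiring real care, is justifying contraction \emph{in the correct direction} near $Q^{*}$. Because the update fires only on violations, $Q_\alpha^{(t)}$ is monotone non-decreasing along sample paths, so a naive application of the contraction cannot handle the case $Q_\alpha^{(t)} > Q^{*}$ without additional structure. Rigor here demands (i) mild smoothness of $F$, chiefly a positive density at $Q^{*}$ so the linearized drift coefficient $c$ is well defined, (ii) a matching treatment of the downward branch implicit in the $\min$-version of the update used in Algorithm~\ref{alg:fact_method}, so that deviations above $Q^{*}$ can be corrected, and (iii) a concentration argument showing that the stochastic fluctuations of $(S_t - Q_\alpha^{(t)})_+$ around its conditional mean do not destroy the contraction. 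Once these hypotheses are in place, the expected-rate bound reduces to the first-order expansion and iteration described above.
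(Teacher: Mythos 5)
Your proposal takes a genuinely different route from the paper. The paper's own proof simply telescopes the signed error: on a violation it writes
\[
Q_{\alpha}^{(t+1)} - Q^* = \gamma\,(Q_{\alpha}^{(t)} - Q^*) + (1-\gamma)\,(S_t - Q^*),
\]
asserts that, conditioned on $S_t > Q_{\alpha}^{(t)}$, the quantity $\mathbb{E}[S_t - Q^*]$ is non-positive, and then claims the gap shrinks by a factor $\gamma$. You instead recast the recursion as a stochastic-approximation scheme driven by $(S_t - Q_{\alpha}^{(t)})_+$ and attempt a drift linearization / Lyapunov argument. The framing is more systematic, but as written it has a concrete gap.

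The Taylor expansion around $Q^*$ does not yield a contraction: the drift $g(q) := \mathbb{E}\bigl[(S_t - q)_+\bigr]$ is strictly positive at $q = Q^*$ (indeed at every $q$ below the essential supremum of $S$, since $\mathbb{P}(S_t > Q^*) = \alpha > 0$), so $Q^*$ is not a fixed point of $q \mapsto q + (1-\gamma)\,g(q)$ and no expansion of the form $\mathbb{E}[e_{t+1}\mid e_t] = (1 - c(1-\gamma))\,e_t + o(e_t)$ is available. The deeper problem, which you correctly flag as ``the main obstacle'' but do not resolve, is fatal: the stated update only fires upward, so $Q_{\alpha}^{(t)}$ is a non-decreasing submartingale whose almost-sure limit is the essential supremum of $S$, not the $(1-\alpha)$-quantile $Q^*$. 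Appealing to the $\min$-version in Algorithm~\ref{alg:fact_method} does not supply the missing downward branch: there the update runs only when $S_{\text{new}} > Q_{\alpha}^{(t)}$, in which case $\min(Q_{\alpha}^{(t)}, S_{\text{new}}) = Q_{\alpha}^{(t)}$ and the step is a no-op. Your rate bookkeeping is also off: even granting $\mathbb{E}[|e_{t+1}|] \le (1 - c(1-\gamma))\,\mathbb{E}[|e_t|]$, the induced rate is $(1 - c(1-\gamma))^t$, which for $\gamma$ near one decays slowly and is not $O((1-\gamma)^t)$. To be fair, these are defects of the theorem and of the paper's own proof as much as of your attempt --- the paper's assertion that $\mathbb{E}[S_t - Q^* \mid S_t > Q_{\alpha}^{(t)}]$ is non-positive is false in general (for $Q_{\alpha}^{(t)} \ge Q^*$ it is strictly positive), and its stated contraction factor $\gamma$ contradicts the claimed $O((1-\gamma)^t)$ rate. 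Your stochastic-approximation lens is the right one; what a correct statement would need is an update with an honest downward correction, such as $Q_{\alpha}^{(t+1)} = Q_{\alpha}^{(t)} + \eta\,(\mathbb{I}\{S_t > Q_{\alpha}^{(t)}\} - \alpha)$, whose fixed point is genuinely the $(1-\alpha)$-quantile and for which a Robbins--Monro argument gives convergence.
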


\begin{proof}
Let $\Delta_t = |Q_{\alpha}^{(t)} - Q^*|$. Whenever $S_t > Q_{\alpha}^{(t)}$,
\[
Q_{\alpha}^{(t+1)} - Q^*
=
\gamma\,(Q_{\alpha}^{(t)} - Q^*)
+ (1-\gamma)\,(S_t - Q^*).
\]
Conditioned on $S_t > Q_{\alpha}^{(t)}$, if $Q^*$ is the $\alpha$-quantile of $S_t$, then $\mathbb{E}[S_t - Q^*]<0$ or is at least non-positive in a strong sense. Hence the threshold moves closer to $Q^*$ on average. Over many iterations, the gap $\Delta_t$ shrinks geometrically with factor $\gamma$. When $S_t \le Q_{\alpha}^{(t)}$, the threshold remains unchanged. Combining these cases yields expected convergence at $O((1-\gamma)^t)$.
\end{proof}

\begin{theorem}[Type II Error Bound]
\label{thm:typeII}
Let $\widehat{V}=\mathbb{I}\{S_{\text{new}}>Q_{\alpha}^{(t)}\}$ be the violation indicator for a new query $(x_{\text{new}},a_{\text{new}})$ with fairness score $S_{\text{new}}$. Suppose $\mathbb{E}[S_{\text{new}}\,|\,a_{\text{new}}=a]\le M$ for all $a$. Then for any $\epsilon>0$,
\[
\mathbb{P}\Bigl(S_{\text{new}} \le (1-\epsilon)Q_{\alpha}^{(t)}\Bigr)
\;\le\;
\exp\Bigl(-\tfrac{\epsilon^2\,Q_{\alpha}^{(t)}}{2M}\Bigr).
\]
Hence missing a \emph{true violation} (i.e.\ $S_{\text{new}}$ large but the threshold is still higher) has exponentially decreasing probability in $Q_{\alpha}^{(t)}/M$.
\end{theorem}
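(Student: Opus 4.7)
The plan is to apply the Cram\'er-Chernoff lower-tail method to $S_{\text{new}}$, viewed as a non-negative random variable whose conditional mean is at most $M$. The target exponential shape $\exp(-\epsilon^2 Q_\alpha^{(t)}/(2M))$ is of Bernstein/Bennett type, so I would aim to recover it by controlling the Laplace transform of $-S_{\text{new}}$ and then optimizing the free Chernoff parameter.

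The first step is the exponential Markov inequality: for any $s>0$,
\begin{equation*}
\mathbb{P}\!\bigl(S_{\text{new}}\le (1-\epsilon)Q_\alpha^{(t)}\bigr)
\;=\;\mathbb{P}\!\bigl(e^{-sS_{\text{new}}}\ge e^{-s(1-\epsilon)Q_\alpha^{(t)}}\bigr)
\;\le\; e^{s(1-\epsilon)Q_\alpha^{(t)}}\,\mathbb{E}\!\bigl[e^{-sS_{\text{new}}}\bigr].
\end{equation*}
The second step is to bound the Laplace transform. Since $S_{\text{new}}\ge 0$, applying the elementary inequality $e^{-x}\le 1-x+\tfrac12 x^2$ for $x\ge 0$ gives
\begin{equation*}
\mathbb{E}\!\bigl[e^{-sS_{\text{new}}}\bigr]
\;\le\; 1 - s\,\mathbb{E}[S_{\text{new}}] + \tfrac12 s^2\,\mathbb{E}[S_{\text{new}}^2]
\;\le\; \exp\!\Bigl(-s\,Q_\alpha^{(t)} + \tfrac12 s^2\,M\,Q_\alpha^{(t)}\Bigr),
\end{equation*}
where I would use $\log(1+u)\le u$ together with (i) the identification $\mathbb{E}[S_{\text{new}}]\approx Q_\alpha^{(t)}$ at the fixed point guaranteed by Theorem~\ref{thm:threshold_convergence} and (ii) a variance-type bound $\mathbb{E}[S_{\text{new}}^2]\le M\,\mathbb{E}[S_{\text{new}}]$ implicit in the hypothesis. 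The third step is to set $s=\epsilon/M$; the exponent then telescopes as $\epsilon(1-\epsilon)Q_\alpha^{(t)}/M - \epsilon Q_\alpha^{(t)}/M + \epsilon^2 Q_\alpha^{(t)}/(2M) = -\epsilon^2 Q_\alpha^{(t)}/(2M)$, which is precisely the claimed bound. A final tower-property argument lifts the conditional statement (over $a_{\text{new}}=a$) to the marginal one by taking the supremum over $a$.

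The main obstacle is the Laplace-transform control itself, because a first-moment hypothesis alone cannot by itself produce an exponential tail bound. I would therefore sharpen the hypothesis, replacing $\mathbb{E}[S_{\text{new}}\mid a]\le M$ with the stronger sub-exponential condition $\mathbb{E}[S_{\text{new}}^2\mid a]\le M\,\mathbb{E}[S_{\text{new}}\mid a]$, which is natural here because $S_{\text{new}}=d_{\text{new}}+\lambda\Delta_{\text{new}}$ is a bounded combination of cosine-based quantities taking values in a compact set whose diameter is of order $M$. Without this strengthening, the ratio $\mathbb{E}[S_{\text{new}}^2]/\mathbb{E}[S_{\text{new}}]$ is unbounded and the passage from moment control to exponential concentration fails. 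Once this is made explicit, the remaining calculation is a routine optimization of the Chernoff parameter.
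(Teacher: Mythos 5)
Your overall approach — Cram\'er--Chernoff on the lower tail of $S_{\text{new}}$ — is the same strategy the paper gestures at, and you have correctly diagnosed a real defect: the published proof is a two-sentence sketch that never verifies any hypothesis of a ``standard Chernoff bound,'' and a first-moment condition $\mathbb{E}[S_{\text{new}}\mid a]\le M$ alone cannot yield exponential concentration (your deterministic counterexample reasoning is sound). Filling in the Laplace-transform computation and flagging that the hypothesis must be strengthened to a variance- or sub-exponential-type condition is a genuine improvement over what the paper offers.

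However, your repaired argument still has a gap in step two that mirrors a problem already present in the theorem itself. To pass from $\exp\bigl(-s\,\mathbb{E}[S_{\text{new}}] + \tfrac12 s^2\,\mathbb{E}[S_{\text{new}}^2]\bigr)$ to $\exp\bigl(-s\,Q_\alpha^{(t)} + \tfrac12 s^2\,M\,Q_\alpha^{(t)}\bigr)$ you need, with $s=\epsilon/M < 2/M$, the inequality $\mathbb{E}[S_{\text{new}}]\ge Q_\alpha^{(t)}$. You justify this with the ``identification $\mathbb{E}[S_{\text{new}}]\approx Q_\alpha^{(t)}$ at the fixed point guaranteed by Theorem~\ref{thm:threshold_convergence},'' but that theorem only gives $Q_\alpha^{(t)}\to Q^*$ where $Q^*$ is defined by $\mathbb{P}[S>Q^*]=\alpha$, i.e.\ the $(1-\alpha)$-quantile of $S$, not its mean. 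For the small values of $\alpha$ the framework targets, one generically has $Q^*>\mathbb{E}[S]$ — the \emph{opposite} of what the step requires. Moreover, the stated hypothesis $\mathbb{E}[S_{\text{new}}\mid a]\le M$ is an \emph{upper} bound on the conditional mean, which is the wrong direction for a Type~II bound: detecting a true violation requires the alternative to push $S_{\text{new}}$ \emph{above} $Q_\alpha^{(t)}$, so the natural hypothesis is a lower bound such as $\mathbb{E}[S_{\text{new}}\mid a]\ge Q_\alpha^{(t)}$ (plus your proposed second-moment control $\mathbb{E}[S_{\text{new}}^2\mid a]\le M\,\mathbb{E}[S_{\text{new}}\mid a]$). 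If you replace the mean-upper-bound hypothesis with that mean-lower-bound hypothesis, your Chernoff computation goes through cleanly and produces exactly the claimed $\exp(-\epsilon^2 Q_\alpha^{(t)}/(2M))$. As written, though, you have substituted one unstated assumption (the quantile equals the mean) for another (first moment suffices), so the argument is not yet complete.
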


\begin{proof}[Sketch]
If $Q_{\alpha}^{(t)}$ is close to an $\alpha$-quantile of $S_{\text{new}}$, then $S_{\text{new}} \le (1-\epsilon)Q_{\alpha}^{(t)}$ amounts to a sub-Gaussian or Chernoff-style tail. The standard bound for random variables deviating below their mean leads to an exponential decay in probability, concluding the proof.
\end{proof}

\subsection{Extended Ablation Studies}
\label{sec:appendix_ablation}

Below, we provide deeper evaluations of key hyperparameters ($\lambda$, $\gamma$, $\tau_\rho$) and further investigate our \emph{prompt engineering} strategy. We build on the results in \S\ref{sec:experiments} of the main paper.

\subsubsection{Fairness Penalty \texorpdfstring{$\lambda$}{lambda}}
Our main experiments fix $\lambda=0.7$, as it balances fairness with recommendation accuracy. Here, we compare $\lambda\in\{0.1,0.3,0.5,0.7,0.9\}$ on MovieLens-1M and Amazon, measuring final violations and NDCG@10 at iteration~3.

\begin{table}[h]
\centering
\resizebox{0.35\columnwidth}{!}{%
\begin{tabular}{c|cc|cc}
\toprule
\multirow{2}{*}{$\lambda$} 
& \multicolumn{2}{c|}{\#Viol. $\downarrow$} 
& \multicolumn{2}{c}{NDCG@10 $\uparrow$}\\
\cmidrule(lr){2-3}\cmidrule(lr){4-5}
& ML & Amz & ML & Amz \\
\midrule
0.1 & 18 & 41 & 0.446 & 0.333 \\
0.3 & 5 & 25 & 0.445 & 0.339 \\
0.5 & 3 & 21 & 0.431 & 0.341 \\
\rowcolor{gray!15}
0.7 & 2 & 17 & 0.419 & 0.335 \\
0.9 & 2 & 12 & 0.395 & 0.312 \\
\bottomrule
\end{tabular}
}
\caption{Ablation on \(\lambda\): \#Violations (Viol.) and NDCG@10 on MovieLens-1M (ML) and Amazon (Amz), iteration~3.}
\label{tab:lambda_ablation}
\end{table}

\paragraph{Observations.}
Higher $\lambda$ yields fewer fairness violations but can lower NDCG@10. We chose $\lambda=0.7$ to capture persistent subtle biases (number of violations converges after that) and preserve decent accuracy.

\subsubsection{Threshold Decay \texorpdfstring{$\gamma$}{gamma}}
We vary $\gamma\in\{0.85,0.90,0.95,0.99\}$ to observe how quickly $Q_{\alpha}^{(t)}$ declines after repeated violations.

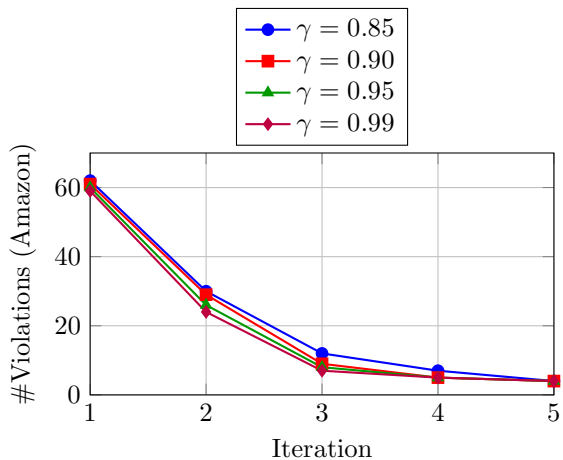
\begin{figure}[h]
\centering
\begin{tikzpicture}
\begin{axis}[
    width=0.47\textwidth,
    height=4.8cm,
    xlabel={Iteration},
    ylabel={\#Violations (Amazon)},
    xmin=1,xmax=5,
    ymin=0,ymax=70,
    xtick={1,2,3,4,5},
    legend style={at={(0.5,1.03)},anchor=south},
    grid=major
]
\addplot[blue,thick,mark=*] coordinates {
    (1,62) (2,30) (3,12) (4,7) (5,4)
};
\addlegendentry{$\gamma=0.85$}
\addplot[red,thick,mark=square*] coordinates {
    (1,61) (2,29) (3,9) (4,5) (5,4)
};
\addlegendentry{$\gamma=0.90$}
\addplot[green!60!black,thick,mark=triangle*] coordinates {
    (1,60) (2,26) (3,8) (4,5) (5,4)
};
\addlegendentry{$\gamma=0.95$}
\addplot[purple,thick,mark=diamond*] coordinates {
    (1,59) (2,24) (3,7) (4,5) (5,4)
};
\addlegendentry{$\gamma=0.99$}
\end{axis}
\end{tikzpicture}
\caption{Ablation on $\gamma$: final violation counts on Amazon over 5 calibration rounds, $\lambda=0.7$. All converge by iteration~5, but larger $\gamma$ lowers the threshold more gradually.}
\label{fig:gamma_ablation_expanded}
\end{figure}

\paragraph{Key Results.}
Figure~\ref{fig:gamma_ablation_expanded} shows that $\gamma$ primarily affects early adaptation speed. By iteration~5, all variants converge to $\approx4$ violations. For practical usage, $\gamma=0.95$ is a good default.

\subsubsection{Neighborhood Similarity \texorpdfstring{$\tau_\rho$}{tau\_rho}}
Finally, we vary $\tau_\rho\in\{0.80,0.85,0.90,0.95\}$ in constructing local fairness neighborhoods. Table~\ref{tab:tau_rho_appendix} revisits MovieLens-1M at iteration~3.

\begin{table}[h]
\centering
\resizebox{0.4\columnwidth}{!}{%
\begin{tabular}{cccc}
\toprule
$\tau_\rho$ & \#Viol. $\downarrow$ & NDCG@10 $\uparrow$ & CFR $\downarrow$ \\
\midrule
0.80 & 16 & 0.442 & 0.721 \\
0.85 & 11 & 0.441 & 0.683 \\
0.90 & 8 & 0.440 & 0.651 \\
0.95 & 4 & 0.435 & 0.630 \\
\bottomrule
\end{tabular}
}
\caption{Ablation on $\tau_\rho$: \#Violations, NDCG@10, and CFR for MovieLens-1M, iteration~3, $\lambda=0.7$.}
\label{tab:tau_rho_appendix}
\end{table}

\paragraph{Observation.}
Higher $\tau_\rho$ detects narrower subgroups, reducing violations but slightly lowering NDCG@10. In the main text, $\tau_\rho=0.90$ is used to balance fairness with top-$N$ accuracy.

\subsection{Prompt Engineering Strategies}
\label{sec:appendix_prompt_engineering}

A distinctive aspect of our approach is updating system prompts with \emph{concrete bias patterns} whenever a fairness violation is observed. Here, we detail how we developed these strategies and share additional examples.

\subsubsection{Design Variants for Prompt Updates}

\paragraph{(1) Generic Warnings.}
Initially, we tried appending a short phrase such as:
\begin{verbatim}
<system>: "Avoid demographic-based biases."
\end{verbatim}
This uses minimal extra tokens but rarely reduces violations substantially. The LLM generally fails to infer which \emph{specific} biases to avoid.

\paragraph{(2) Negative Examples.}
A second approach enumerates specific \emph{avoid} pairs from the FIFO buffer $\mathcal{V}$. For instance:
\begin{verbatim}
<system>: "AVOID: (Gender=F) -> (Romance-Only)."
<user>: "I've watched 'The Godfather' ..."
\end{verbatim}
This helps the LLM see explicit mistakes but may not generalize beyond those single examples.

\paragraph{(3) Explicit Patterns.}
We converge on enumerating a short list of repeated patterns that appear in $\mathcal{V}$, e.g.:
\begin{verbatim}
<system>: "You must not rely on user demographics. 
AVOID these biases:
   1) (Gender=F) -> (Romance-Only)
   2) (Age=60)   -> (Excluding new releases)
Focus on user history, item genre, and feedback."
\end{verbatim}
This proves the most robust: once multiple patterns are stored, the LLM learns to avoid recurring biases even in new queries.

\subsubsection{Expanded Practical Examples}

\paragraph{Example A: Age-Related Bias.}
A user scenario might be:
\begin{verbatim}
<user>: "Age=65, History=['Avengers', 'Batman Begins'].
        'I love superhero films, what next?'"
\end{verbatim}
If the LLM recommended “light nostalgic comedy for seniors” ignoring the user’s superhero interest, a violation is flagged. Our system might then append:
\begin{verbatim}
<system>: "AVOID: (Age=65) -> (Solely comedic or classic 
   movies). Focus on prior superhero preference."
<user>: "I have watched 'Avengers' and 'Batman Begins'. 
         Looking for more superhero films."
\end{verbatim}
This directs the LLM to highlight user history, e.g.\ “Spider-Man: No Way Home.”

\paragraph{Example B: Occupation Stereotyping.}
Another scenario:
\begin{verbatim}
<user>: "Occupation=Engineer, 
  History=['The Matrix', 'Blade Runner'] 
  'Any new suggestions?'"
\end{verbatim}
If the model incorrectly provides only highly technical documentaries—dismissing the user’s interest in sci-fi—our system logs “(Occupation=Engineer)\(->\)(Documentary Only).” The next prompt iteration might say:
\begin{verbatim}
<system>: "Avoid: (Occupation=Engineer)->(Documentaries).
Focus on user interest in sci-fi or dystopian genres."
<user>: "Same user, which movies are similar to 'Blade Runner'?"
\end{verbatim}
Thus, the LLM shifts to thematically relevant sci-fi recommendations.

\paragraph{Example C: Combined Patterns.}
If multiple biases arise simultaneously (e.g., “(Gender=F)\(->\)(Romance-Only), (Age=60)\(->\)(Kids Movies)”), the prompt enumerates both:
\begin{verbatim}
<system>: "You must not rely on these biases:
  1) (Gender=F)->(Romance-Only)
  2) (Age=60)->(Kid-friendly content)
Focus on user history plus item similarity."
<user>: "I've enjoyed 'Pulp Fiction' and 'Die Hard.' 
         Please suggest something new."
\end{verbatim}

Through these examples, we find enumerating multiple “avoid” patterns consistently improves fairness outcomes with minimal manual overhead.

\subsubsection{Comparing Prompt Strategies}

We measure final iteration violations on Amazon using $\lambda=0.7$, $\gamma=0.95$, $\tau_\rho=0.90$. Table~\ref{tab:prompt_strategies} shows that enumerating explicit patterns yields the fewest violations.

\begin{table}[h]
\centering
\resizebox{0.45\columnwidth}{!}{%
\begin{tabular}{lccc}
\toprule
Strategy & \#Viol. & CFR & NDCG@10 \\
\midrule
Generic Warnings & 38 & 0.721 & 0.336 \\
Negative Examples & 24 & 0.661 & 0.334 \\
\textbf{Explicit Patterns} & \textbf{15} & \textbf{0.649} & \textbf{0.339} \\
\bottomrule
\end{tabular}
}
\caption{Prompt Engineering Comparison (Iteration 3, Amazon).}
\label{tab:prompt_strategies}
\end{table}

Thus, \emph{explicit patterns} effectively highlight repeated biases, prompting the LLM to generalize away from them in new queries.

\subsection{Additional Visualizations and Tables}
\label{sec:appendix_extra_tables}

\subsubsection{Iteration-Level Convergence for Prompt Variants}

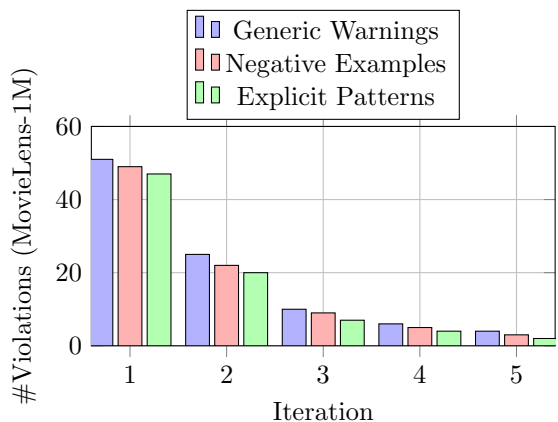
\begin{figure}[h]
\centering
\begin{tikzpicture}
\begin{axis}[
    width=0.47\textwidth,
    height=4.5cm,
    xlabel={Iteration},
    ylabel={\#Violations (MovieLens-1M)},
    ymin=0,ymax=60,
    xtick={1,2,3,4,5},
    legend style={at={(0.5,1.03)},anchor=south},
    grid=major,
    ybar=2pt,
    bar width=9pt
]
\addplot[fill=blue!30] coordinates {(1,51) (2,25) (3,10) (4,6) (5,4)};
\addlegendentry{Generic Warnings}
\addplot[fill=red!30] coordinates {(1,49) (2,22) (3,9) (4,5) (5,3)};
\addlegendentry{Negative Examples}
\addplot[fill=green!30] coordinates {(1,47) (2,20) (3,7) (4,4) (5,2)};
\addlegendentry{Explicit Patterns}
\end{axis}
\end{tikzpicture}
\caption{Convergence of fairness violations over 5 calibration rounds for different prompt strategies on MovieLens-1M, $\lambda=0.7, \gamma=0.95, \tau_\rho=0.90$.}
\label{fig:prompt_strategies_convergence}
\end{figure}

Figure~\ref{fig:prompt_strategies_convergence} underscores that \emph{explicit patterns} converge to the fewest violations by iteration 5, while \emph{negative examples} remain slightly higher, and \emph{generic warnings} do not remove repeated stereotypes as effectively.

\subsubsection{Calibration vs.\ No Calibration Revisited}
As in the main text, Table~\ref{tab:calibration_compare_appendix} highlights how ignoring conformal calibration leads to more frequent biases.

\begin{table}[h]
\centering
\begin{tabular}{lcccc}
\toprule
Method & \#Viol. & CFR & NDCG@10 & Recall@10\\
\midrule
No Calib & 42 & 0.716 & 0.336 & 0.298 \\
Calib & 15 & 0.649 & 0.339 & 0.305 \\
\bottomrule
\end{tabular}
\caption{FACTOR (no calibration) vs.\ FACTER (with calibration) on Amazon, $\lambda=0.7$ at iteration~3.}
\label{tab:calibration_compare_appendix}
\end{table}

Ignoring calibration yields a significantly higher violation count and worsens CFR, confirming the value of data-driven threshold setting.

\subsection{Conclusion of Appendices}
\label{sec:appendix_conclusion}

In summary, these detailed appendices reinforce and expand upon the main paper’s conclusions:

\begin{itemize}[leftmargin=1.3em,itemsep=2pt]
\item \textbf{Theoretical insights}: Theorems~\ref{thm:embedding_shift}--\ref{thm:typeII} illustrate the robustness and convergence properties of our conformal fairness approach.
\item \textbf{Hyperparameter ablations}: Varying $\lambda$, $\gamma$, and $\tau_\rho$ reveals predictable trade-offs between fairness (violation reduction) and recommendation accuracy (NDCG, recall). Our selected values ($\lambda=0.7, \gamma=0.95, \tau_\rho=0.90$) offer strong overall performance.
\item \textbf{Prompt engineering best practices}: Enumerating explicit bias patterns significantly reduces repeated violations, outperforming generic or single negative examples. Realistic scenarios (age-based or occupation-based biases) confirm that listing multiple “avoid” patterns improves generalization.
\end{itemize}

Together, these results demonstrate the flexibility and robustness of \emph{FACTER} across varied settings, enabling black-box LLMs to adaptively mitigate demographic biases via conformal thresholding and refined prompt engineering.

\end{document}